\newcommand{\al}{\alpha}
\newcommand{\de}{\delta}
\newcommand{\vep}{\varepsilon}
\newcommand{\ga}{\gamma}
\newcommand{\si}{\sigma}
\renewcommand{\th}{\theta}
\newcommand{\bp}{\mathbf{p}}
\newcommand{\bq}{\mathbf{q}}
\newcommand{\bu}{\mathbf{u}}
\newcommand{\bv}{\mathbf{v}}
\newcommand{\bU}{\mathbf U}
\newcommand{\bV}{\mathbf V}
\newcommand{\ha}{\widehat a}
\newcommand{\had}{\ha^{\,\dagger}}
\newcommand{\hc}{\widehat c}
\newcommand{\hcd}{\hc^{\,\dagger}}
\newcommand{\hd}{\widehat d}
\newcommand{\hdd}{\hd^{\mkern 6mu\dagger}}
\newcommand{\hnu}{\widehat\nu}
\newcommand{\hC}{\widehat{C}}
\newcommand{\CC}{{\mathbb C}}
\newcommand{\cH}{{\mathcal H}}
\newcommand{\cI}{{\mathcal I}}
\newcommand{\cL}{{\mathcal L}}
\newcommand{\cO}{{\mathcal O}}
\newcommand{\id}{1\hspace{-.25em}{\rm l}}
\newcommand{\ket}[1]{|#1\rangle}
\newcommand{\bra}[1]{\langle#1|}
\newcommand{\kb}[2]{|#1\rangle\langle#2|}
\newcommand{\kbo}[1]{\kb{#1}{#1}}
\newcommand{\ksp}[3]{\bra{#1}#2\ket{#3}}
\newcommand{\Ksp}[3]{\big\langle #1\big|#2\big|#3\big\rangle}
\let\ds\displaystyle
\newcommand{\mss}{\kern 1pt}
\renewcommand{\le}{\leqslant}
\renewcommand{\ge}{\geqslant}
\newcommand{\tends}[1]{\bbuildrel{\hbox to 2em{\rightarrowfill}}_{#1}^{}}
\newcommand{\csch}{\operatorname{csch}}
\newcommand{\tr}{\operatorname{tr}}
\newcommand{\diag}{\operatorname{diag}}
\newcommand{\iu}{\mathrm i}
\newcommand{\e}{\mathrm e}
\newcommand{\diff}{\mathrm{d}}
\newcommand{\spec}{\operatorname{spec}}
\newcommand{\en}{\enspace}
\newcommand{\Int}[1]{\,\mathop{\!#1}\limits^{\lower1ex\hbox{$\scriptstyle\circ$}}{}}
\newcommand{\lra}{\leftrightarrow}
\newtheorem{thm}{Theorem}
\theoremstyle{remark}
\def\clap#1{\hbox to 0pt{\hss#1\hss}}
\newcommand{\tcup}{{\textstyle\bigcup}}
\newcommand{\finf}{f^{(\infty)}}
\title{A duality principle for the multi-block entanglement entropy of free fermion systems}
\author[1,+]{J.A. Carrasco}
\author[1,+]{F. Finkel}
\author[1,+,*]{A. Gonz\'alez-L\'opez}
\author[1,2,+]{P. Tempesta}
\affil[1]{Departamento de F\'\i sica Te\'orica II, Universidad Complutense de Madrid, 28040 Madrid, Spain}
\affil[2]{Instituto de Ciencias Matem\'aticas \textup(CSIC--UAM--UC3M--UCM\textup), c/ Nicol\'as
  Cabrera  13--15, 28049 Madrid, Spain}
\affil[*]{Corresponding author. Email: artemio@ucm.es}
\affil[+]{These authors contributed equally to this work}
\begin{abstract}
  The analysis of the entanglement entropy of a subsystem of a one-dimensional quantum system is a
  powerful tool for unravelling its critical nature. For instance, the scaling behaviour of the
  entanglement entropy determines the central charge of the associated Virasoro algebra. For a
  free fermion system, the entanglement entropy depends essentially on two sets, namely the
  set~$A$ of sites of the subsystem considered and the set~$K$ of excited momentum modes. In this
  work we make use of a general duality principle establishing the invariance of the entanglement
  entropy under exchange of the sets~$A$ and~$K$ to tackle complex problems by studying their dual
  counterparts. The duality principle is also a key ingredient in the formulation of a novel
  conjecture for the asymptotic behavior of the entanglement entropy of a free fermion system in
  the general case in which both sets~$A$ and~$K$ consist of an arbitrary number of blocks. We
  have verified that this conjecture reproduces the numerical results with excellent precision for
  all the configurations analyzed. We have also applied the conjecture to deduce several
  asymptotic formulas for the mutual and~$r$-partite information generalizing the known ones for
  the single block case.
\end{abstract}
\begin{document}

\flushbottom
\maketitle

\thispagestyle{empty}

\section*{Introduction}

One of the distinguishing features of the quantum realm is the existence of entangled states in
composite systems, which have no classical analogue and play a fundamental role in quantum
information theory and condensed matter physics~(see, e.g., Refs.~\cite{AFOV08,HHHH09}). A widely
used quantitative measure of the degree of entanglement between two subsystems~$A,B$ of a quantum
system $A\cup B$ in a pure state~$\rho=\ket\psi\bra\psi$ is the R\'enyi entanglement
entropy~\cite{Re61} $S_{\al}(A)=(1-\al)^{-1}\log\tr(\rho_A^\alpha)$, where~$\rho_A$ is the reduced
density matrix of the subsystem~$A$ and~$\al>0$ is the R\'enyi parameter (the von Neumann entropy is
obtained in the limit $\al\to1$). It is easy to show that~$S_\al(A)=S_\al(B)$, and that the
entanglement entropy vanishes when the whole system is in a non-entangled (product) state. Over
the last decade, it has become clear that the study of the entanglement between two extended
subsystems of a many-body system in one dimension is a powerful tool for uncovering its
criticality properties~\cite{VLRK03,JK04,IJK05,FIK08}. The reason for this is that one-dimensional
critical quantum systems are governed by an effective conformal field theory (CFT) in $(1+1)$
dimensions, whose entanglement entropy can be evaluated in closed form in the thermodynamic
limit~\cite{HLW94,CC04jstat,CC05}. In the simplest case, when the subsystem~$A$ consists of a
single interval of length~$L$ and the whole system is in its ground state, the scaling
of~$S_\al(A)$ for~$L\to\infty$ is determined solely by the central charge~$c$. In order to probe
the full operator content of the CFT, one needs to analyze more complicated situations in which
the set~$A$ is the union of a finite number of intervals. In fact, in the last few years there has
been a considerable interest in this problem, both for CFTs and one-dimensional lattice models
(integrable spin chains or free fermion systems), as witnessed by the number of papers published
on this subject~(see, e.g., Refs.~\cite{CFH05,CH09,CC09,CCT09,ATC10,FC10,CCT11,CTT14}).

In this work we shall extend this analysis to the more general case in which the system's state is
also made up of several blocks of consecutive excited momentum modes, which has received
comparatively less attention~\cite{EZ05,AFC09,KZ10,CIS11,AEFS14,CFGR17}. An important motivation
for dealing with this type of states is that it makes it possible to treat position and momentum
space on a more equal footing, thus revealing certain symmetries that have not been fully
exploited so far. This approach naturally leads to a duality principle for the behavior of
the entanglement entropy under the exchange of the position and momentum space block
configurations, which in fact can be exploited to solve problems that up until now had defied an
analytic treatment~\cite{AEF14} with standard techniques like the Fisher--Hartwig
conjecture~\cite{FH68}. We have applied this duality principle to propose a new conjecture on the
composability of the entanglement entropy in the multi-block case, which yields a closed
asymptotic formula for the R\'enyi entanglement entropy of a free fermion system in the most
general multi-block configuration, both in position and momentum space. This formula, which we
have numerically verified for a wide range of configurations both for~$0<\al<1$ and~$\al\ge1$,
reduces to the known ones when the configuration in momentum space consists of a single block. It
also leads to closed asymptotic formulas for the mutual and the tripartite~\cite{CH09} (or
$r$-partite~\cite{CTT14}) information, which again agree with the general CFT predictions.

\section*{Results and methods}

\subsection*{Preliminaries and notation}

The model considered is a system of $N$ free (spinless) hopping fermions with creation
operators~$a^\dagger_j$ (where the subindex~$j=0,\dots,N-1$ denotes the site) and Hamiltonian
$H=\sum_{i,j=0}^{N-1}g_N(i-j)\,a_i^\dagger a_j$ preserving the total fermion number. We shall
further assume that the hopping amplitude~$g_N$ satisfies $g_N(k)=g_N(-k)^*=g_N(k+N)$, so that~$H$
is Hermitian and translationally invariant. For this reason, it is convenient to introduce the
Fourier-transformed creation operators
\begin{equation}
  \label{hatas}
  \had_j=\frac1{\sqrt N}\sum_{l=0}^{N-1}\e^{2\pi\iu jl/N}a^\dagger_l\,,\qquad 0\le l\le N-1\,.
\end{equation}
It is straightforward to check that the operators~$\ha_j$, $\had_j$ satisfy the canonical
anticommutation relations (CAR), and that they diagonalize~$H$. In fact, we have
$ H=\sum_{l=0}^{N-1}\vep_N(l)\,\had_l\ha_l\,, $ with
$\vep_N(l)=\sum_{j=0}^{N-1}g_N(j)\e^{2\pi\iu jl/N}\,. $ It can be shown that the total momentum
operator~$P$ is also diagonal in this representation, namely $P=\sum_{l=0}^{N-1}p_l\,\had_l\ha_l$,
with $p_l=2\pi l/N\bmod 2\pi$. Thus the operator~$\had_l$ creates a (non-localized) fermion with
well-defined energy~$\vep_N(l)$ and momentum~$p_l$. Note that~$\vep_N(l)$ is obviously real for
all modes~$l$, and that the model is critical (gapless) if~$\vep_N(l)$ vanishes for some~$l$. We
shall suppose in what follows that the system is in a pure energy eigenstate
\begin{equation}
    \label{K}
    \ket K\equiv\had_{k_1}\cdots\,\had_{k_M}\ket0\,,\qquad K=\{k_1,\dots,k_M\}\subset\{0,\dots,N-1\}\,,
\end{equation}
where~$\ket0$ is the vacuum, consisting of~$M$ fermions with momenta $2\pi k_j/N$. We shall be
interested in studying the entanglement properties of a subset of sites
$ A\equiv\{x_1,\dots,x_L\}\subset\{0,\dots,N-1\} $ with respect to the whole system when the
latter is in the pure state~$\ket K$. As is well known, these properties are encoded in the
reduced density matrix $\rho_A=\tr_B\rho\,,$ where~$\rho\equiv\kbo K$ and~$B=\{0,\dots,N-1\}-A$.
As mentioned in the Introduction, the degree of entanglement is usually measured using the R\'enyi
entanglement entropy~$S_\al(A)\equiv (1-\al)^{-1}\log\tr(\rho_A^\al)$ (with~$\al>0$). One of the
most efficient ways of computing this entropy is to exploit the connection between the reduced
density matrix~$\rho_A$ and the correlation matrix~$C_A$, defined by
\begin{equation}
  \label{CAdef}
  (C_A)_{jk}=\ksp{K}{a_{x_j}^\dagger {a\vphantom{\vrule depth2.4pt}}_{x_k}}{K}\,,\qquad 1\le j,k\le L\,.
\end{equation}
This matrix is obviously Hermitian, with eigenvalues~$\nu_1,\dots,\nu_L$ lying in the
interval~$[0,1]$. Moreover, since the state~$\ket K$ is determined by the conditions
$\had_k\ket K=0$ for $k\in K$ and $\ha_k\ket K=0$ for $k\not\in K$, the expectation value
$\ksp{K}{\had_j\ha_k}{K}$ vanishes for $k\notin K$ and equals~$\de_{jk}$ for $k\in K$. From this
fact and Eq.~\eqref{hatas} we immediately obtain the following explicit expression for the matrix
elements of the correlation matrix~$C_A$:
\begin{equation}
  \label{CAexp}
  (C_A)_{jk}=\frac1N\sum_{l\in K}\e^{-2\pi\iu (x_j-x_k)l/N}\,,\qquad 1\le j,k\le L\,.
\end{equation}
As first shown in Refs.~\cite{Pe03,VLRK03}, the reduced density matrix~$\rho_A$ factors as the
tensor product $ \rho_A=\bigotimes_{l=1}^{L}\rho_A^{(l)}\,, $ where each~$\rho_A^{(l)}$ is a
$2\times2$ matrix with eigenvalues $\nu_l$ and $1-\nu_l$. In particular, the spectrum of~$\rho_A$
is the set of numbers
\begin{equation}\label{entspecpos}
  \rho_A(\vep_1,\dots,\vep_L)=\prod_{l=1}^L\big[\mkern2mu\nu_l^{\,\vep_l}
  (1-\nu_l)^{1-\vep_l}\mkern1mu\big]\,,
  \qquad\vep_l\in\{0,1\}\,.
\end{equation}
Since the R\'enyi entropy~$S_\al$ is additive, it follows that
\begin{equation}\label{SAexact}
  S_\al(A)=\sum_{l=1}^LS_\al\bigl(\rho_A^{(l)}\bigr)=(1-\al)^{-1}\sum_{l=1}^L\log\bigl(\nu_l^\al+(1-\nu_l)^\al\bigr)\,.
\end{equation}
Note that the latter method for computing~$S_\al(A)$ is computationally very advantageous, since
it is based on the diagonalization of the~$L\times L$ matrix~$C_A$ as opposed to direct
diagonalization of the~$2^L\times 2^L$ matrix~$\rho_A$.

As explained above, it is of great interest to determine the (leading) asymptotic behaviour of the
entanglement entropy~$S_\al(A)$ as the size~$L$ of the subsystem~$A$ tends to infinity. To this
end, note first of all that the matrix~$C_A$ is Toeplitz (i.e., $(C_A)_{jk}$ depends only on the
difference~$j-k$) provided that the subsystem~$A$ under consideration is a single block, i.e., a
set of consecutive sites. Let us further assume that Eq.~\eqref{CAexp} has a well-defined limit
as~$N\to\infty$ with~$L$ fixed, in the sense that there exists a piecewise smooth density
function~$c(p)$ such that $(C_A)_{jk}\to(2\pi)^{-1}\int_0^{2\pi}c(p)\,\e^{-\iu(j-k)p}\diff p$ in
this limit. As first shown by Jin and Korepin~\cite{JK04}, it is then possible to apply a
particular case of the Fisher--Hartwig conjecture~\cite{FH68} proved by Basor\cite{Ba79} to derive
an asymptotic formula for the characteristic polynomial of the correlation matrix~$C_A$, and hence
for the entanglement entropy~$S_\al(A)$ (see also Refs.~\cite{AEFS14,CFGRT16,CFGR17}). However,
when the subsystem~$A$ is not a single block it is clear from Eq.~\eqref{CAexp} that~$C_A$ is not
a Toeplitz matrix, and therefore the method just outlined cannot be used to derive the asymptotic
behaviour of~$S_\al(A)$ for large~$L$. It should also be stressed that the asymptotic result in
Ref.~\cite{JK04} is only valid for~$N\gg L\gg1$ (i.e., for an \emph{infinite} chain), since the
$N\to\infty$ limit with $L$ fixed is taken before letting~$L\to\infty$. In particular, the
asymptotic behaviour of~$S_\al(A)$ when $N\to\infty$ with $L/N\to\ga_x\in(0,1)$ cannot be directly
inferred from the latter result. As we shall explain shortly, these drawbacks can be overcome
through the use of a duality principle that we shall introduce below.

\subsection*{The dual correlation matrix}

We start by defining the projection of the operator~$\had_j$ onto the set~$\cL(\cH_A)$ of linear
operators from the Hilbert space~$\cH_A$ of the subsystem~$A$ into itself in the obvious way,
namely (cf.~Eq.~\eqref{hatas})
\begin{equation}
  \label{aA}
  \had_{A,j}=\frac1{\sqrt N}\sum_{l\in A}\e^{2\pi\iu jl/N}a^\dagger_l\,,
\end{equation}
and similarly for~$\ha_{A,j}$. We shall also denote by~$\had_{B,j}$, $\ha_{B,j}$ the corresponding
projections onto~$\cL(\cH_B)$, so that $\ha_j=\ha_{A,j}+\ha_{B,j}\,,$
$\had_j=\had_{A,j}+\had_{B,j}\,.$ We then define the \emph{dual correlation matrix}~$\hC_A$ as
the~$M\times M$ matrix with elements
\begin{equation}
  \label{hCA}
  (\hC_A)_{lm}=\Ksp0{{\ha\vphantom{\vrule depth2.85pt}}_{A,k_l}\had_{A,k_m}}0\,,\qquad 1\le l,m\le M\,.
\end{equation}
The dual correlation matrix~$\hC_B$ of the complementary set $B$ is defined similarly. The
analogue of the matrix~$\hC_A$ for continuous systems, usually called the overlap matrix, was
originally introduced by~Klich~\cite{Kl06} and has been extensively used in the literature (see,
e.g., Ref.~\cite{CMV11}). From the definition~\eqref{aA} of the projected operators~$\had_{A,j}$
we immediately obtain the explicit formula
\begin{equation}
  \label{hCAexp}
  (\hC_A)_{lm}=\frac1N\sum_{j\in A}\e^{-2\pi\iu(k_l-k_m)j/N}\,,\qquad 1\le l,m\le M\,.
\end{equation}
Comparison with Eq.~\eqref{CAexp} shows that $\hC_A$ is obtained from $C_A$ by exchanging the
roles played by the sites $x_j\in A$ and the excited modes~$k_l\in K$, which justifies the term
``dual correlation matrix''. We shall show in what follows that this duality can be successfully
exploited to obtain the asymptotic behaviour of~$S_\al(A)$ in situations in which the usual
approach based on the correlation matrix~$C_A$ is not feasible.

The matrix~$\hC_A$ is clearly Hermitian and positive semidefinite, since for
all~$z_1,\dots,z_M\in\CC$ we have
$ \sum_{l,m=1}^M(\hC_A)_{lm}\,z_l^*z_m=\big\Vert\big(\sum_{l=1}^Mz_m\,\had_{A,k_m}\big)
\ket0\big\Vert^2\,. $ Thus the eigenvalues~$\hnu_1,\dots,\hnu_M$ of~$\hC_A$ are nonnegative. Using
the identities $\ksp0{\cO_A\cO'_B}0=\ksp0{\cO'_B\cO_A}0=0\,,$ where~$\cO_A$ and~$\cO'_B$ are
linear operators respectively supported on~$A$ and $B$, it is straightforward to check that
$\hC_B=\id_M-\hC_A\,.$ Since~$\hC_B$ is also positive semidefinite, from the previous relation it
follows that~$\hnu_i\in[0,1]$ for all $i=1,\dots,M$. Moreover, the Hermitian character of~$\hC_A$
implies that there exists a unitary~$M\times M$ matrix~$U\equiv(u_{lm})_{1\le l,m\le M}$ such
that~$U\hC_AU^\dagger=\diag(\hnu_1,\dots,\hnu_M)$, and
hence~$U\hC_BU^\dagger=\id-U\hC_AU^\dagger=\diag(1-\hnu_1,\dots,1-\hnu_M)$. We then define the
corresponding rotated operators $\hc_l=\sum_{m=1}^Mu_{lm}\,\ha_{k_m}$ ($1\le l\le M$)\,, which
together with their adjoints satisfy the CAR by the unitarity of~$U$. We shall also need the
projections of the latter operators onto the spaces~$\cL(\cH_A)$ and~$\cL(\cH_B)$, namely
\begin{equation}\label{hcAB}
 \hc_{A,l}=\sum_{m=1}^Mu_{lm}\,\ha_{A,k_m}\,,\qquad \hc_{B,l}=\sum_{m=1}^Mu_{lm}\,\ha_{B,k_m}=\hc_l-\hc_{A,l}\,,
\end{equation}
and similarly for their adjoints. From the above definitions it follows that the vacuum
correlators of the operators $\{\hc_{A,l},\hcd_{A,l}\}$ and $\{\hc_{B,l},\hcd_{B,l}\}$ are given
by
\begin{equation}
  \label{hcABcorr}
  \Ksp0{\hc_{A,l}\hcd_{A,m}}0=\hnu_l\de_{lm}\,,\qquad \Ksp0{\hc_{B,l}\hcd_{B,m}}0=(1-\hnu_l)\de_{lm}\,,
\end{equation}
and hence $\big\Vert \hcd_{A,l}\ket0\big\Vert^2=\hnu_l$,
$\big\Vert \hcd_{B,l}\ket0\big\Vert^2=1-\hnu_l$\,. Following Ref.~\cite{Kl06}, we note that the
state $\ket\phi=\hcd_{1}\cdots\,\hcd_{M}\ket0$ actually differs from~$\ket K$ by an irrelevant
phase, since by definition of the operators~$\hc_l$ we have
\[
  \ket\phi=\sum_{m_1,\dots,m_M=1}^Mu_{1m_1}^*\cdots\,
  u_{M,m_M}^*\,\had_{k_{m_1}}\cdots\,\had_{k_{m_M}}\ket0 =\bigg(\sum_{\si\in S_M}(-1)^\si
  u_{1\si_1}^*\cdots\,\, u_{M,\si_M}^*\bigg)\had_{k_1}\cdots\,\had_{k_M}\ket0=\det U^*\,\ket K\,,
\]
where~$(-1)^\si$ denotes the sign of the permutation~$\si$. The latter relation implies
that~$\kbo K=\kbo\phi$, a fact that can be exploited in order to derive an expression for the
entanglement entropy~$S_\al(A)$. To this end, for~$\hnu_l\ne0,1$ we define the operators
$\hdd_{A,l}=\hcd_{A,l}/\sqrt{\hnu_l}$\,, $\hdd_{B,l}=\hcd_{B,l}/\sqrt{1-\hnu_l}$\,, so that by
Eq.~\eqref{hcABcorr} the states $\ket1_{A,l}\equiv\hdd_{A,l}\ket0$\,,
$\ket1_{B,l}\equiv\hdd_{B,l}\ket0$ are properly normalized. On the other hand, when~$\hnu_l=0$ the
state~$\hcd_l\ket0=\hcd_{B,l}\ket0$ is supported on~$B$ by Eq.~\eqref{hcABcorr}, and is
normalized, since the operators~$\hc_l,\hcd_l$ obey the CAR. Hence in this case we simply set
$\hdd_{B,l}=\hcd_{B,l}=\hcd_l$\,, $\ket1_{B,l}=\hdd_{B,l}\ket0$\,. Similarly, when~$\hnu_l=1$ we
define $\hdd_{A,l}=\hcd_{A,l}=\hcd_l$\,, $\ket1_{A,l}=\hdd_{A,l}\ket0$, and the previous
definitions we thus have $\hcd_l=\sqrt{\hnu_l}\;\hdd_{A,l}+\sqrt{1-\hnu_l}\;\hdd_{B,l}$
($1\le l\le M$)\,, and therefore
$\ket\phi=\bigotimes_{l=1}^M\Big(\sqrt{\hnu_l}\;\ket1_{A,l}\ket0_{B,l}+
\sqrt{1-\hnu_l}\;\ket0_{A,l}\ket1_{B,l}\Big)$, where~$\ket0_{A,l}$, $\ket0_{B,l}$ denote the
vacuum state in the~$l$-th mode (with respect to the~$\hcd_m$ operators) supported respectively
on~$A$ or~$B$. Using the identity~$\kbo K=\kbo\phi$ and tracing over the degrees of freedom of the
subsystem~$B$ we easily arrive at the fundamental formula
\begin{equation}
  \label{rhoAprod}
  \rho_A=\bigotimes_{l=1}^M\Big(\hnu_l\ket1_{A,l}\bra1_{A,l}+(1-\hnu_l)\ket0_{A,l}\bra0_{A,l}\Big)\,.
\end{equation}
In particular, the spectrum of the matrix~$\rho_A$ is the set of numbers
\begin{equation}\label{entspec}
  \rho_A(\vep_1,\dots,\vep_M)=\prod_{l=1}^M\big[\mkern2mu\hnu_l^{\,\vep_l}
  (1-\hnu_l)^{1-\vep_l}\mkern1mu\big]\,,
  \qquad\vep_l\in\{0,1\}\,,
\end{equation}
up to zero eigenvalues. From the additivity of the R\'enyi entropy and Eq.~\eqref{rhoAprod}
or~\eqref{entspec} it follows that the entanglement entropy~$S_\al(A)$ is given by
\begin{equation}\label{SAdual}
  S_\al(A)=(1-\al)^{-1}\sum_{l=1}^M\log\bigl(\,\hnu_l^{\,\al}+(1-\hnu_l)^\al\bigr)\,,
\end{equation}
which can be interpreted as the dual of Eq.~\eqref{SAexact}.

\subsection*{The duality principle}

As we have seen in the previous subsection, the R\'enyi entanglement entropy~$S_\al(A)$ can be
computed in two equivalent ways, using the ``coordinate'' correlation matrix~$C_A$ and its
``dual'' $\hC_A$ (cf.~Eqs.~\eqref{SAexact}-\eqref{SAdual}). This fact strongly suggests the
existence of a deeper duality principle that applies to the reduced density matrix~$\rho_A$
itself, as evidenced by Eqs.~\eqref{entspecpos}-\eqref{entspec}. To formulate this principle, we
shall introduce the more precise notation~$\rho_A(K)$ to denote the reduced density matrix of the
subsystem~$A$ when the whole system is in the pure energy eigenstate~$\ket K$ given by
Eq.~\eqref{K}. It should be borne in mind that in this notation~both sets~$A$ and~$K$ are subsets
of~$\{0,\dots,N-1\}$, with the subindex always labelling the subsystem sites (in position space)
and the argument the set of excited momenta. Let~$\spec T$ stand for the spectrum of the
matrix~$T$, i.e., the set of its eigenvalues, each counted with its respective multiplicity.
Likewise, we shall denote by $\spec_0\rho$ the spectrum of a density matrix~$\rho$ excluding its
zero eigenvalues, i.e., $\spec_0\rho=\spec\bigl(\rho|_{(\ker\rho)^\perp}\bigl)\,.$ We shall then
say that two density matrices~$\rho_{i}$ ($i=1,2$) \emph{are similar up to zero eigenvalues} if
$\spec_0\rho_1=\spec_0\rho_2$, i.e., $\rho_1$ and~$\rho_2$ have the same nonzero eigenvalues with
the same multiplicities. We are now ready to state the following fundamental result:
\begin{thm}\label{thm.thm}
  The reduced density matrices~$\rho_A(K)$ and~$\rho_K(A)$ are similar up to zero eigenvalues.
\end{thm}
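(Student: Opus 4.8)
The plan is to deduce the theorem from the two equivalent descriptions of the nonzero spectrum of $\rho_A(K)$ already obtained above---the ``coordinate'' one in Eq.~\eqref{entspecpos}, built from the $L\times L$ correlation matrix, and the ``dual'' one in Eqs.~\eqref{rhoAprod}--\eqref{entspec}, built from the $M\times M$ dual correlation matrix---together with one elementary observation: the coordinate correlation matrix of the \emph{dual} problem (subsystem $K$, excited-momenta set $A$) is literally equal, entry by entry, to the dual correlation matrix $\hC_A$ of the original problem. To make the dependence on the momentum set explicit I will write $C_A(K)$ for the matrix of Eq.~\eqref{CAexp} and $\hC_A(K)$ for that of Eq.~\eqref{hCAexp}, with eigenvalues $\nu_1,\dots,\nu_L$ and $\hnu_1,\dots,\hnu_M$ respectively.

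I would carry this out in four steps. First, recall (from the factorization $\rho_A=\bigotimes_{l=1}^L\rho_A^{(l)}$ underlying Eq.~\eqref{entspecpos}) that $\spec_0\rho_A(K)$ is, as a multiset with multiplicities, the collection of nonzero numbers $\prod_{l=1}^L\nu_l^{\,\vep_l}(1-\nu_l)^{1-\vep_l}$ with $\vep$ ranging over $\{0,1\}^L$. Second, recall from Eqs.~\eqref{rhoAprod}--\eqref{entspec} that this very multiset also equals the collection of nonzero numbers $\prod_{l=1}^M\hnu_l^{\,\vep_l}(1-\hnu_l)^{1-\vep_l}$, $\vep\in\{0,1\}^M$. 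Third, apply the coordinate description to $\rho_K(A)$: since $|K|=M$, its nonzero spectrum is the multiset of nonzero $\prod_{l=1}^M\mu_l^{\,\vep_l}(1-\mu_l)^{1-\vep_l}$, $\vep\in\{0,1\}^M$, where $\mu_1,\dots,\mu_M$ are the eigenvalues of the correlation matrix $C_K(A)$ of the subsystem $K=\{k_1,\dots,k_M\}$ with excited-momenta set $A=\{x_1,\dots,x_L\}$. Fourth, compare matrix elements: by Eq.~\eqref{CAexp} one has
\[
 \bigl(C_K(A)\bigr)_{lm}=\frac1N\sum_{j\in A}\e^{-2\pi\iu(k_l-k_m)j/N}\,,\qquad 1\le l,m\le M\,,
\]
which is exactly the right-hand side of Eq.~\eqref{hCAexp}; hence $C_K(A)=\hC_A(K)$ as Hermitian matrices, so $\{\mu_l\}=\{\hnu_l\}$ as multisets, and therefore $\spec_0\rho_K(A)=\spec_0\rho_A(K)$.

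I do not anticipate a genuine obstacle: once the identity $C_K(A)=\hC_A(K)$ is noticed, the statement follows at once from the two parametrizations of $\spec_0\rho_A(K)$ recalled in the first two steps. The only point demanding care is bookkeeping of multiplicities. Since $\rho_A(K)$ and $\rho_K(A)$ act on Fock spaces of different dimensions ($2^L$ and $2^M$), the assertion can concern nonzero eigenvalues only, and one must check that the multiplicities---not merely the underlying sets of values---agree. This is automatic: in Eqs.~\eqref{entspecpos} and~\eqref{entspec} the products over $\{0,1\}^L$, resp.\ $\{0,1\}^M$, are listed with their multiplicities, and the tuples yielding a vanishing product (because some $\nu_l$, resp.\ $\hnu_l$, equals $0$ or $1$) are discarded in the same way on both sides, so the two nonzero multisets genuinely coincide. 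A more structural alternative would be to exhibit an explicit partial isometry intertwining $\rho_A(K)$ and $\rho_K(A)$ on their ranges, but the short computation above seems preferable.
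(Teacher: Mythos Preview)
Your proposal is correct and follows essentially the same route as the paper: both arguments use the two equivalent descriptions of $\spec_0\rho_A(K)$ given by Eqs.~\eqref{entspecpos} and~\eqref{entspec}, observe the matrix identity $C_K(A)=\hC_A(K)$ obtained by comparing Eqs.~\eqref{CAexp} and~\eqref{hCAexp}, and conclude that $\spec_0\rho_A(K)=\spec_0\rho_K(A)$. Your treatment of multiplicities is slightly more explicit, but the substance is the same.
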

\begin{proof}
  Indeed, by Eqs.~\eqref{entspecpos}-\eqref{entspec} the spectrum of~$\rho_A(K)$ excluding the
  zero eigenvalues can be written in the two equivalent ways
  \begin{equation}\label{essspec}
    \spec_0\bigl(\rho_A(K)\bigr)=\bigg\{\prod_{l=1}^L\nu_l^{\vep_l}(1-\nu_l)^{1-\vep_l}\mid\vep_l\in\{0,1\},\
    \nu_l\notin\{0,1\}\bigg\}
    =\bigg\{\prod_{m=1}^M\hnu_m^{\,\vep_m}(1-\hnu_m)^{1-\vep_m}\mid\vep_m\in\{0,1\},
    \ \hnu_m\notin\{0,1\}\bigg\}\,.
  \end{equation}
  Let us denote by~$C_A(K)$ and~$\hC_A(K)$ the correlation matrix~\eqref{CAexp} and its dual
  version~\eqref{hCAexp}. We then have~$\hC_A(K)=C_K(A)$, $C_A(K)=\hC_K(A)$, and consequently the
  sets $\{\nu_l\}_{l=1}^L$ and $\{\hnu_m\}_{m=1}^M$ are interchanged by the duality
  transformation~$A\lra K$\,. Applying Eq.~\eqref{essspec} to the reduced density
  matrix~$\rho_K(A)$ we conclude that~$\spec_0\bigl(\rho_A(K)\bigr)=\spec_0\bigl(\rho_K(A)\bigr)$,
  as claimed.
\end{proof}
If $S$ is any entropy functional, from now on we shall use the more precise notation
$S(A;K)=S\bigl(\rho_A(K)\bigr)$\,. Obviously, from the Shannon--Khinchin axioms it follows that
two density matrices which are similar up to zero eigenvalues necessarily have the same entropy.
From this fact and the previous theorem one can immediately deduce the important \emph{duality
  principle}
\begin{equation}
  \label{SAKKA}
  S(A;K)=S(K;A)\,,
\end{equation}
valid for any entropy functional~$S$.

As a first application of this general principle, we shall rigorously derive an asymptotic
expression for the R\'enyi entanglement entropy of a subsystem~$A$ consisting of~$r>1$ disjoint
blocks of consecutive spins when the set~$K$ of excited momenta is a single set of $M$ consecutive
integers, valid in the limit $N\gg M\gg 1$. More precisely, let $A=\bigcup_{i=1}^r[U_i,V_i)$,
$K=[P,Q)$\,, where~$[U_i,V_i)$ denotes the set of all integers~$l$ such that $U_i\le l<V_i$ (so
that the cardinal of~$[U_i,V_i)$ is $V_i-U_i$), and similarly for~$[P,Q)$. We first
let~$N\to\infty$ with~$M$ fixed and assume that the following limits exist:
\[
\lim_{N\to\infty}\frac{2\pi U_i}{N}\equiv u_i\,,\qquad \lim_{N\to\infty}\frac{2\pi V_i}{N}\equiv
v_i\,,
\]
with~$u_i,v_i\in[0,2\pi]$, $u_{i+1}-v_i>0$, $v_r-u_1<2\pi$\,. We shall be interested in the
asymptotic behavior of the R\'enyi entropy~$S_\al$ as $M\to\infty$. Thus the problem at hand is
precisely the dual of the one solved in Refs.~\cite{AFC09,CFGR17} with the help of the
Fisher--Hartwig conjecture. One of the main results of the latter references can be recast in the
present context as the asymptotic formula
\begin{equation}
  \label{S1s}
  S_\al\bigl([U,V);\tcup_{j=1}^s[P_j,Q_j)\bigr)\sim
  b_\al\bigg(s\log L
  +\sum_{j=1}^s\log\bigl(2\sin\bigl(\tfrac{q_j-p_j}2\bigr)\bigr)+\log f(\bp,\bq)\biggr)+s\,c_\al\,,
\end{equation}
where $p_j\equiv\lim\limits_{N\to\infty}(2\pi P_j/N)$,
$q_j\equiv\lim\limits_{N\to\infty}(2\pi Q_j/N)$,
\begin{equation}
  \label{fpqCal}
  b_\al=\frac16\bigg(1+\frac1\al\bigg),\en
  c_\al=\frac1{1-\al}\int_0^\infty\bigg(\al\csch^2t
  -\csch t\csch(t/\al)-\frac{1-\al^2}{6\al}\,\e^{-2t}\bigg)\frac{\diff t}t,\en
    f(\bp,\bq)=\prod_{1\le i<j\le s}\frac{\sin\bigl(\tfrac{q_j-p_i}2\bigr)
    \sin\bigl(\tfrac{p_j-q_i}2\bigr)}{\sin\bigl(\tfrac{p_j-p_i}2\bigr)
    \sin\bigl(\tfrac{q_j-q_i}2\bigr)}
\end{equation}
and the~$\sim$ notation means that the difference between the LHS and the RHS tends to $0$ as
$L\to\infty$. From the duality relation~\eqref{SAKKA} and Eqs.~\eqref{S1s}-\eqref{fpqCal} it then
follows that when~$M\to\infty$ we have
\begin{equation}
  S_\al\bigl(\tcup_{i=1}^r[U_i,V_i);[P,Q)\bigr)
  =
  S_\al\bigl([P,Q);\tcup_{i=1}^r[U_i,V_i)\bigr)
  \sim b_\al\bigg[r\log M
  +\sum_{i=1}^r\log\bigl(2\sin\bigl(\tfrac{v_i-u_i}2\bigr)\bigr)+\log f(\bu,\bv)\biggr]
  +r\,c_\al\,.
    \label{Salprev}
\end{equation}
Taking into account that~$f(\bu,\bv)=1$ when~$r=1$, from the previous formula we obtain the
remarkable formula
\begin{equation}
  \label{Sr1MI}
  S_\al\bigl(\tcup_{i=1}^r[U_i,V_i);[P,Q)\bigr)\sim
  \sum_{i=1}^rS_\al\bigl([U_i,V_i);[P,Q)\bigr)-I_\al(\bu,\bv)\,,\qquad\text{with}\quad
  I_\al(\bu,\bv)\equiv-b_\al\log f(\bu,\bv)\,,
\end{equation}
where the last term can be naturally interpreted as an asymptotic approximation to the
\emph{mutual information} shared by the blocks $[U_1,V_1)\,,\dots,[U_r,V_r)$. We believe that this
is the first time that this asymptotic formula, which agrees with well-known CFT results, has been
rigorously established using the (proved part of the) Fisher--Hartwig conjecture.

It is important to keep in mind the limiting process leading to Eq.~\eqref{Salprev} in order to
correctly assess its limit of validity. For instance, using the connection between one-dimensional
critical systems and $1+1$ dimensional CFTs it follows that the
asymptotic behavior of~$S_\al$ is given (in our notation) by~\cite{CFH05,AEF14}
\begin{equation}\label{SalFalc}
  S_\al\bigl(\tcup_{i=1}^r[U_i,V_i);[P,Q)\bigr)\sim
  b_\al\bigg[r\log\biggl(\frac N\pi\sin\biggl(\pi\,\frac MN\biggr)\biggr)
  +\sum_{i=1}^r\log(v_i-u_i)+\log \finf(\bu,\bv)\biggr]+r\,c_\al\,,
\end{equation}
where~$\finf(\bu,\bv)$ is the product of cross ratios
\begin{equation}\label{fFalc}
  \finf(\bu,\bv)=\prod_{1\le i<j\le r}\frac{(v_j-u_i)(u_j-v_i)}{(u_j-u_i)(v_j-v_i)}\,.
\end{equation}
The apparent discrepancy between the latter formulas and Eqs.~\eqref{fpqCal}-\eqref{Salprev} is
easily explained taking into account that the limiting process in the latter reference is the
\emph{dual} of the present one, namely~$N\to\infty$ with \emph{fixed} $U_i,V_i$ and
$2\pi P/N\to p$, $2\pi Q/N\to q$. In other words, Eqs.~\eqref{fpqCal}-\eqref{Salprev} apply when
$N\gg M\gg 1$ and arbitrary $L<N$, while Eqs.~$\eqref{SalFalc}$-\eqref{fFalc} are valid
for~$N\gg L\gg 1$ and arbitrary $M<N$. It is also obvious that both approaches coincide in the
(rather uninteresting) case in which both $M/N$ and $L/N$ tend to zero. On the other hand, it
should be apparent that neither Eqs.~\eqref{fpqCal}-\eqref{Salprev}
nor~\eqref{SalFalc}-\eqref{fFalc} are valid in the general situation in which \emph{both} $L/N$
and $M/N$ tend to a nonzero limit as $N\to\infty$. In fact, it is clear a priori that none of
these formulas can hold in the latter range, since they are not consistent with the invariance
under complements identity $S(A;K)=S(A^c;K)$ and its dual consequence
$S(A;K)=S(A;K^c)$, where~$A^c$ and $K^c$ respectively denote the complements of $A$
and~$K$ with respect to the set~$\{0,\dots,N-1\}$.

Our next objective is to find an extension of Eqs.~\eqref{Salprev} and~\eqref{SalFalc} valid in
the general case in which~both $\ga_x$ and~$\ga_p\equiv\lim_{N\to\infty}M/N$ tend to a nonzero
limit as~$N\to\infty$. To this end, consider first the simplest case in which $r=s=1$. By
translation invariance and criticality, as $N\to\infty$ we must have
$S_\al([U,V);[P,Q))\sim b_\al\log N+\si_\al(\ga_x,\ga_p)$, where $\ga_x=(V-U)/N$, $\ga_p=(Q-P)/N$
and~$\si_\al$ satisfies: i)~$\ds\si_\al(\ga_x,\ga_p)=\si_\al(\ga_p,\ga_x)$ (on account of the
duality principle~\eqref{SAKKA}),\en ii)~$\si_\al(\ga_x,\ga_p)=\si_\al(1-\ga_x,\ga_p)$ (by the
invariance of the entropy under complements),\en
iii)~$\si_\al(\ga_x,\ga_p)=b_\al\log\bigl(2\ga_x\sin(\pi\ga_p)\bigr)+c_\al+o(1)$\,,
with~$\lim_{\ga_x\to0}o(1)=0$ (by Eq.~\eqref{SalFalc} with $r=1$). (In fact, combining
conditions~i) with ii) and iii) it immediately follows
that~$\si_\al(\ga_x,\ga_p)=\si_\al(\ga_x,1-\ga_p)$
and~$\si_\al(\ga_x,\ga_p) =b_\al\log\bigl(2\ga_p\sin(\pi\ga_x)\bigr)+c_\al+o(1)$, where~$o(1)\to0$
as~$\ga_p\to0$\,.) Obviously, the simplest function satisfying the previous requirements
is~$\si_\al(\ga_x,\ga_p)=b_\al\log\bigl(\tfrac2\pi\sin(\pi\ga_x)\sin(\pi\ga_p)\bigr)+c_\al$,
obtained from~Eq.~\eqref{SalFalc} with~$r=1$ by the replacement~$\pi\ga_x\mapsto\sin(\pi\ga_x)$.
Numerical calculations show that for all~$\al>0$ the correct asymptotic formula
for~$S_\al([U;V);[P,Q))$ is indeed the simplest one, namely
\begin{equation}
  \label{Sal1+1}
  S_\al([U,V);[P,Q))\sim b_\al\log\biggl(\frac{2N}\pi\,\sin(\pi\ga_x)\sin(\pi\ga_p)\biggr)+c_\al
\end{equation}
(see, e.g., Fig.~\ref{fig.err1} (a) for the most ``unfavourable'' case~$\ga_x=\ga_p=1/2$). This
conclusion is also in agreement with the analogous result in Ref.~\cite{CMV11jstat} for the $XX$
model. In fact, we found the leading correction to the approximation~\eqref{Sal1+1} to be
monotonic in~$N$ and~$O(N^{-2})$ for $\al=1$, and $O\bigl(\cos(2\pi\ga_x\ga_pN\bigr)\,N^{-2/\al})$
for~$\al>1$ (cf.~Fig.~\ref{fig.err1}). This behaviour qualitatively agrees with the results of
Ref.~\cite{CE10} for the error of the Jin--Korepin asymptotic formula for the R\'enyi entanglement
entropy of the ground state of the \emph{infinite}~$XX$ chain (Eq.~\eqref{Sal1+1}
with~$\sin(\pi\ga_x)$ replaced by~$\pi\ga_x$). On the other hand, in the case~$0<\al<1$ (which was
not addressed in the latter reference), our numerical calculations suggest that the correction to
Eq.~\eqref{Sal1+1} is monotonic and $O(N^{-2})$.
\begin{figure}[ht]%
  \includegraphics[width=\linewidth]{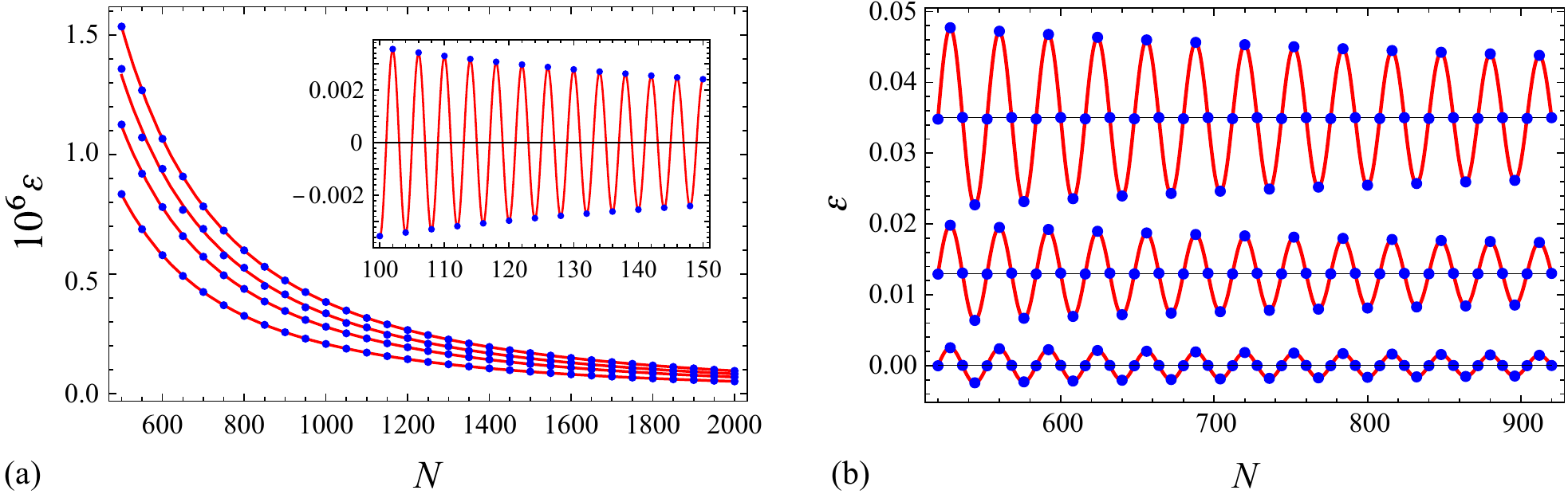}
  \caption{Difference~$\vep$ between the exact value of the R\'enyi entropy, computed via
    Eq.~\eqref{SAexact} by numerical diagonalization of the correlation matrix~\eqref{CAexp}, and
    its asymptotic approximation \eqref{Sal1+1} for (a)~$\ga_x=\ga_p=1/2$ and (b)~$\ga_x=1/8$,
    $\ga_p=1/4$. In panel (a) we have shown the cases (bottom to top) $\al=3/5$, $2/3$, $3/4$, $1$
    (von Neumann entropy) and $\al=2$ (inset), while panel (b) depicts the cases~$\al=2$, $5/2$,
    $3$ (bottom to top, with the horizontal axis displaced respectively by $0.013$ and $0.035$ in
    the last two cases to avoid overlap). The solid red lines represent the curves providing the
    best fits of the data to the laws~$a\,N^{-2}$ (main panel (a)) and
    $a\,N^{-2/\al}\cos(2\pi\ga_x\ga_p N)$ (inset of panel (a) and panel (b)).}
  \label{fig.err1}
\end{figure}

At this point, it is very natural to assume that Eq.~\eqref{Sr1MI}~and its dual are valid for
\emph{all} values of the parameters~$\ga_x,\ga_p\in(0,1)$, and not just for~$\ga_p\ll1$
or~$\ga_x\ll1$, respectively. The latter assumption and Eq.~\eqref{Sal1+1} thus lead to the
asymptotic formulas
\begin{align}\label{Salr1}
  S_\al\bigl(\tcup_{i=1}^r[U_i,V_i);[P,Q)\bigr)
  &\sim b_\al\bigg[r\log\biggl(\frac{2N}\pi\,\sin(\pi\ga_p)\biggr)
    +\sum_{i=1}^r\log\sin\bigl(\tfrac{v_i-u_i}2\bigr)\biggr]-I_\al(\bu,\bv)+r\,c_\al\,,\\
  S_\al\bigl([U,V);\tcup_{i=1}^s[P_i,Q_i)\bigr)
  &\sim b_\al\bigg[s\log\biggl(\frac{2N}\pi\,\sin(\pi\ga_x)\biggr)
    +\sum_{i=1}^s\log\sin\bigl(\tfrac{q_i-p_i}2\bigr)\biggr]-I_\al(\bp,\bq)+s\,c_\al\,.
    \label{Sal1s}
\end{align}
In fact, the validity of the latter equations can be established by noting that one can go from
Eq.~\eqref{S1s}, which holds for an \emph{infinite} chain, to its analogue for a finite chain by
the usual procedure~\cite{CMV11jstat,CTT14} of replacing the ``arc distance''~$L$ by the chord
length~$(N/\pi)\sin(\pi L/N)=(N/\pi)\sin(\pi\ga_x)$\,. In this way Eq.~\eqref{S1s} immediately
yields Eq.~\eqref{Sal1s}, which implies its counterpart~\eqref{Salr1} by the duality
principle~\eqref{SAKKA}.

Again, our numerical calculations for several block configurations and a wide range of values of
the R\'enyi parameter~$\al$ fully corroborate the validity of Eqs.~\eqref{Salr1}-\eqref{Sal1s} (see,
e.g., Fig.~\ref{fig.errr+1}). More precisely, our numerical analysis suggests that for
sufficiently large~$N$ the error term in the latter equations behaves
as~$f(N)O(N^{-\min(2,2/\al)})$, where~$f(N)$ is a periodic function of~$N$. In particular, the
error term may not be monotonic in~$N$ even for~$\al\le 1$, in contrast with what happens in the
$r=s=1$ case. The above results are in agreement with those reported in Ref.~\cite{FC10} for the
(infinite) $XY$ chain and its corresponding free fermion model with $\al>1$, $r=2$ and $s=1$.
\begin{figure}[ht]%
  \includegraphics[width=\linewidth]{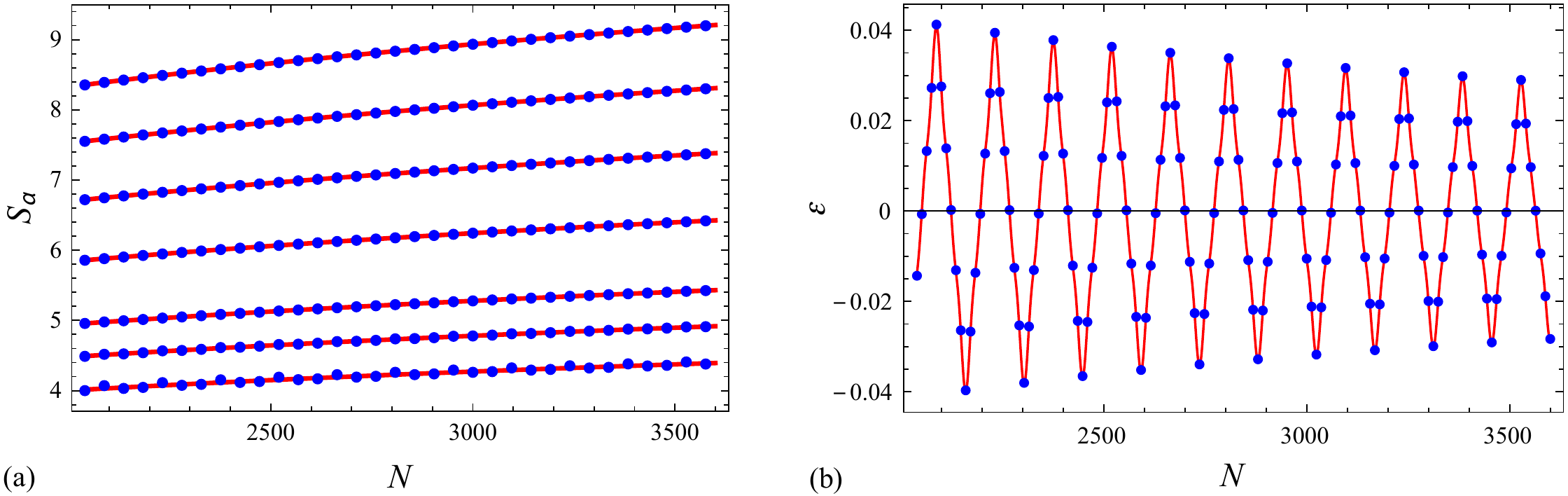}
  \caption{(a) Exact R\'enyi entropy~$S_\al$ (blue dots) vs.~its asymptotic
    approximation~\eqref{Salr1} (continuous red line) for a subsystem consisting of three
    equispaced blocks of equal length~$N/12$ when the whole system's state~\eqref{K} is made up of
    a sequence of consecutive excited modes of length~$N/12$ ($r=3$, $s=1$, $\ga_x=1/4$,
    $\ga_p=1/12$). The values of the R\'enyi parameter~$\al$ considered are (from top to
    bottom)~$1/2$, $3/5$, $3/4$, $1$, $3/2$, $2$ and $3$. (b) Difference~$\vep$ between the exact
    entropy~$S_3$ and its approximation~\eqref{Salr1} in the previous configuration as a function
    of the number of fermions~$N$. The continuous red line is the graph of the
    function~$f(N)N^{-2/3}$, with
    $f(N)=-5.54238\,\cos(\nu_0N)-0.742586\,\cos(3\nu_0N)-0.39794\,\cos(5\nu_0N)$ and
    $\nu_0=2\pi\ga_x\ga_p/r=\pi/72$.}
  \label{fig.errr+1}
\end{figure}

\subsection*{Multi-block entanglement entropy: conjecture for the general case}

We shall address in this section the general problem, in which both sets~$A$ and~$K$ consist of
several blocks of consecutive sites or modes, respectively. To the best of our knowledge, an
asymptotic formula for the entanglement entropy in this case has not previously appeared in the
literature. As explained above, the main difficulty is now that neither the correlation
matrix~$C_A$ nor its dual~$\hC_A$ are Toeplitz, so that the standard procedure based on the use of
the Fisher--Hartwig conjecture to obtain an asymptotic formula for the characteristic polynomial
of the correlation matrix~$C_A$ (or of its dual~$\hC_A$) is not applicable. Our approach for
deriving a plausible conjecture for the asymptotic behavior of $S_\al$ in the general case
considered in this subsection relies instead on the general duality principle discussed in the
previous section (cf.~Theorem~\ref{thm.thm} and~Eq.~\eqref{SAKKA}). In addition, we shall make the
natural assumption that when the distance between any two consecutive blocks $A_i$, $A_{i+1}$ is
much larger than the maximum block length (i.e., when
$\min_{1\le i\le r}(u_{i+1}-v_i)\gg\max_{1\le i\le r}(v_i-u_i)$, where~$u_{r+1}\equiv u_1+2\pi$)
the entanglement entropy is asymptotic to the sum of the single block entropies~$S_\al(A_i;K)$.
The motivation behind this assumption is that when the blocks are far apart their mutual influence
should be negligible, and the R\'enyi entropy is of course additive over independent events.

The simplest asymptotic formula satisfying the above assumption is the trivial one
$S_\al(A;K)\sim\sum_{i=1}^rS_\al(A_i;K)$. However, the latter formula cannot be correct, since it
violates the duality principle. The obvious way of fixing this shortcoming would be to add the
dual term $\sum_{j=1}^sS_\al(A;K_j)$ to the RHS, but the resulting formula violates the above
assumption. On the other hand, since by Eq.~\eqref{Sr1MI}
$\sum_{j=1}^s S_\al(A;K_j)\sim\sum_{i=1}^r\sum_{j=1}^s S_\al(A_i;K_j)-sI_\al(\bu,\bv)$,
and~$I_\al(\bu,\bv)\sim0$ when the blocks in coordinate space are far apart, the heuristic formula
\begin{equation}
  \label{genconj}
  S_\al(A;K)
  \sim\sum_{i=1}^rS_\al(A_i;K)+
  \sum_{j=1}^s
  S_\al(A;K_j)-\sum_{i=1}^r\sum_{j=1}^s
  S_\al(A_i;K_j)
\end{equation}
satisfies the above fundamental assumption. This relation is also clearly consistent with the
duality principle~\eqref{SAKKA}, since the RHS of Eq.~\eqref{genconj} is invariant under the
exchange of the sets $A$ and $K$ on account of Theorem~\ref{thm.thm}. We are thus led to
conjecture that when~$N\to\infty$ the R\'enyi entropy of a configuration with~$r$ blocks $A_i$ in
coordinate and $s$ blocks $K_j$ in momentum space satisfies the previous relation. Using
Eqs.~\eqref{Sr1MI}, its dual and Eq.~\eqref{Sal1+1} we immediately arrive at the closed asymptotic
formula
\begin{equation}
  S_\al(A;K)\sim rs\bigg(b_\al\log\bigg(\frac{2N}\pi\bigg)+c_\al\bigg)
    +s\bigg(b_\al\sum_{i=1}^r\log\sin\big(\tfrac{v_i-u_i}2\big)-I_\al(\bu,\bv)\bigg)
    +r\bigg(b_\al\sum_{i=1}^s\log\sin\big(\tfrac{q_i-p_i}2\big)-I_\al(\bp,\bq)\bigg)\,.
  \label{conjfinal}
\end{equation}
The latter equation is manifestly consistent with the duality principle stated in
Theorem~\ref{thm.thm}, as expected from the previous remark. It is also apparent that
Eq.~\eqref{conjfinal} reduces to Eq.~\eqref{Salr1} or~\eqref{Sal1s} respectively for $s=1$ or
$r=1$, as the asymptotic mutual information~$I_\al$ vanishes for a single block. Moreover, it is
straightforward to explicitly check that when the blocks in coordinate space are far apart the RHS
reduces to the sum of the asymptotic approximations~\eqref{Sal1s} to the single-block
entropies~$S_\al(A_i;K)$, since~$I_\al(\bu,\bv)\sim0$~in this limit. (By duality, a similar remark
applies to the case in which the blocks~$[P_j,Q_j)$ in momentum space are far apart from each
other.) Finally, it is immediate to check that Eq.~\eqref{conjfinal} satisfies the invariance
under complements identity.
\begin{figure}[t]%
  \includegraphics[width=\linewidth]{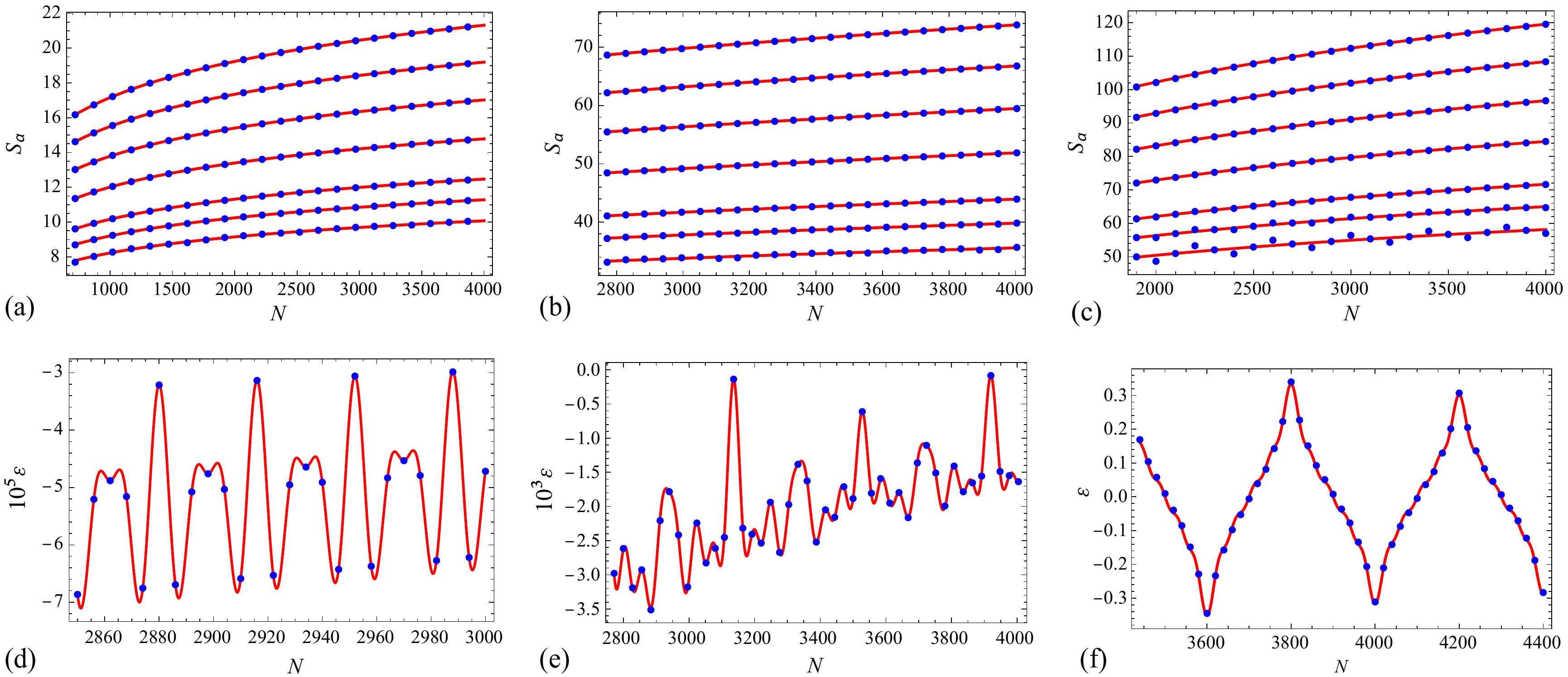}
  \caption{(a)--(c) Exact R\'enyi entropy $S_\al$ (blue dots) and its asymptotic
    approximation~\eqref{conjfinal} (continuous red line) for~$\al=1/2$, $3/5$, $3/4$, $1$, $3/2$,
    $2$, $3$ (top to bottom) in (a) a symmetric configuration with $r=3$, $s=2$, $\ga_x=1/2$,
    $\ga_p=1/3$, (b) an asymmetric configuration with $r=7$, $s=4$, $\ga_x=1/2$, $\ga_p=1/4$, and
    (c) a symmetric configuration with $r=10$, $s=5$, $\ga_x=1/2$, $\ga_p=1/4$. (d)--(f)
    Difference~$\vep$ between the exact entropy $S_\al$ and its approximation~\eqref{conjfinal}
    for the above configurations and (d)~$\al=1/2$, (e)~$\al=1$ (von Neumann entropy), and
    (f)~$\al=2$. The red lines represent the corresponding curves~$f(N)N^{-\min(2,2/\al)}$,
    with~$f(N)=\sum_{k=0}^{k_{\mathrm{max}}}a_k\cos(k\nu N)$ given in Table~\ref{tab.err}.}
\label{fig.entrs}
\end{figure}
\begin{figure}[b!]%
  \includegraphics[width=\linewidth]{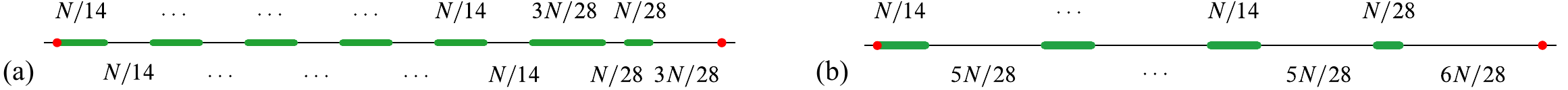}
  \caption{Asymmetric block configuration discussed in Fig.~\ref{fig.entrs} (b) in (a) coordinate
    space, (b) momentum space (the thick green lines represent the blocks, and the red dots are
    the two identified endpoints of the chain).}
  \label{fig.configs}
\end{figure}
We have verified through extensive numerical calculations with a wide range of configurations in
coordinate and momentum space that when~$N\gg1$ Eq.~\eqref{conjfinal} is correct. In fact, for
symmetric configurations (consisting of equally spaced blocks of the same length, both in
coordinate and momentum space) the error term in the latter equation behaves
as~$f(N)N^{-\min(2,2/\al)}$, where~$f$ is again a periodic function. More precisely (for
rational~$\ga_x$ and $\ga_p$), $f(N)$ is well approximated by a trigonometric
polynomial~$\sum_{k=0}^{k_{\mathrm{max}}}a_k\cos(k\nu N)$ with small $k_{\mathrm{max}}$
(independent of~$N$), where the main frequency $\nu$ is the product
of~$\nu_0\equiv2\pi\ga_x\ga_p/rs$ with a simple fraction that can be computed from the
configuration parameters $r$, $s$, $\ga_x$, $\ga_p$. The behavior of the error is very similar in
non-symmetric configurations, except that in some cases it appears to decay faster than~$N^{-2}$
for $0<\al<1$. As an example, in Fig.~\ref{fig.entrs} we present our results for three different
configurations with $(r,s)=(3,2),(7,4),(10,5)$. More precisely, the first and last of these
configurations are symmetric, while the middle one is (slightly) asymmetric, as detailed in
Fig.~\ref{fig.configs}. As can be seen from Figs.~\ref{fig.entrs} (d)-(f), the error in
Eq.~\eqref{conjfinal} behaves in these three cases as described above, where the coefficients
$a_k$ of the trigonometric polynomial~$f(N)$ and its fundamental frequency~$\nu$ are listed in
Table~\ref{tab.err}.
\begin{table}[ht]
\centering
\begin{tabular}{|c|c|l|c|c|}
  \hline
  Case & $k_{\mathrm{max}}$ & \hfill $(a_0,\dots,a_{k_{\mathrm{max}}})$\hfill\null& $\nu_0$ &$\nu$\\
  \hline
  (d) & $2$ & $(-438.485,105.29,66.716)$& $\pi/18$&$\nu_0$\\\hline
  (e) & $14$ &
               \begin{minipage}[t]{.6\linewidth}
                 $(-21790.1,76.0009,1602.85,154.097,5143.99,397.121,416.007,1950.55,$\\
                 $4556.52,156.444,756.382,168.572,2164.74,232.817,2661.63)$
               \end{minipage}
       &  $\pi/112$& $2\nu_0/7$\\\hline
  (f)& $9$& $(0,-852.969,0,-202.359,0,-99.4396,0,-57.2755,0,-55.2294)$
                                                                              &$\pi/200$& $\nu_0$\\
  \hline
\end{tabular}
\caption{\label{tab.err}Coefficients~$a_k$ and fundamental frequency~$\nu$ of the trigonometric
  polynomial~$f(N)=\sum_{k=0}^{k_{\mathrm{max}}}a_k\cos(k\nu N)$ in the error of
  Eq.~\eqref{conjfinal} for cases~(d)-(f) in Fig.~\ref{fig.entrs}.}
\end{table}

It should be noted that the asymptotic formula~\eqref{conjfinal}, which we have numerically
checked for a \emph{finite} chain, easily yields as a limiting case an analogous formula for an
infinite chain. Indeed, if in Eq.~\eqref{conjfinal} we let $\ga_x$ tend to $0$ we have
$\sin\bigl((v_i-u_i)/2\bigr)\simeq \pi(V_i-U_i)/N$, and similarly for the other arguments of the
sine functions appearing in the asymptotic mutual information term~$I_\al(\bu,\bv)$. In this way
we easily arrive at the analogue of~Eq.~\eqref{conjfinal} for an infinite chain, namely
\begin{equation}\label{Salinf}
  S_\al^{(\infty)}
  \sim s\,b_\al\log\bigg[\prod_{i=1}^r(V_i-U_i)\cdot\prod_{1\le i<j\le r}
  \frac{(V_j-U_i)(U_j-V_i)}{(U_j-U_i)(V_j-V_i)}\bigg]
  +r\bigg(b_\al\sum_{i=1}^s\log\sin\big(\tfrac{q_i-p_i}2\big)-I_\al(\bp,\bq)\bigg)
  +rs(b_\al\log2+c_\al)\,.
\end{equation}
To the best of our knowledge, this general asymptotic formula has not previously appeared in the
literature. Note also that for~$s=1$ (i.e., when there is a single block of excited momenta)
Eq.~\eqref{Salinf} implies the asymptotic expression for the mutual information of~$r$ blocks in
coordinate space conjectured in Ref.~\cite{AEF14}.

From the asymptotic approximation~\eqref{conjfinal} (or its equivalent version
Eq.~\eqref{genconj}) one can also deduce a remarkable expression for the (asymptotic) mutual
information of $r$ blocks~$A_i\equiv[U_i,V_i)$ ($1\le i\le r$) in position space when the chain is
in an energy eigenstate~$\ket K$ made up of $s$ blocks~$K_j\equiv[P_j,Q_j)$ ($1\le j\le s$) of
excited momentum modes, defined as
$ \cI_\al\bigl(A_1,\dots,A_r;K\bigr) \equiv\sum_{i=1}^rS_\al\bigl(A_i;K\bigr)-
S_\al\bigl(\tcup_{i=1}^rA_i;K\bigr)\,. $ Indeed, using Eqs.~\eqref{Sr1MI} and~\eqref{genconj} we
immediately obtain the asymptotic formula
\begin{equation}\label{cIapp}
  \cI_\al\bigl(A_1,\dots,A_r;K\bigr)
  \sim\sum_{j=1}^s\bigg[\sum_{i=1}^r
  S_\al\bigl(A_i;K_j\bigr)-S_\al\bigl(\tcup_{i=1}^rA_i;K_j\bigr)\bigg]
  \sim\sum_{j=1}^sI_\al(\bu,\bv)=s\,I_\al(\bu,\bv)\,.
\end{equation}
Thus (in the large~$N$ limit) the multi-block mutual information~$\cI_\al$ is simply~$s$ times the
mutual information when the chain's state~$\ket K$ consists of a single block of consecutive
momenta. In particular, we see that~$\cI_\al$ depends only on the \emph{topology} of the
state~$\ket K$ (i.e., the number of blocks of excited momenta), not on its \emph{geometry} (i.e.,
the particular arrangement and the lengths of these blocks). One could also define the mutual
information of~$s$ blocks of excited momenta~$K_j\equiv[P_j,Q_j)$ ($1\le j\le s$) for a fixed
configuration~$A\equiv\bigcup_{i=1}^r A_i$ in position space. It easily follows from
Eq.~\eqref{cIapp} and the duality principle that this mutual information is asymptotic
to~$r I_\al(\bp,\bq)$. Of course, an analogous formula should hold for the infinite chain
replacing the function $I_\al$ by its $N\to\infty$ limit
$I_\al^{(\infty)}(\bU,\bV)=b_\al\log\finf(\bU,\bV)\,.$ In particular, for $s=1$ the latter
expression implies that the model-dependent overall factor appearing in the general formula for
the mutual information of a $1+1$ dimensional CFT~(see, e.g., Refs.~\cite{CFH05,CC09,CTT14}) is
equal to $1$ for the models under consideration.

An alternative measure of the information shared by the blocks~$A_i$ ($1\le i\le r$) discussed in
Ref.~\cite{CTT14} is the quantity
$\widetilde\cI_\al(A_1,\dots,A_r)\equiv\sum_{l=1}^r(-1)^{l+1}\sum_{1\le i_1<\dots<i_l\le
  r}S_\al(\bigcup_{k=1}^lA_{i_k})$ (we omit the dependence on the chain's state~$\ket K$ for
conciseness's sake). In particular, for~$r=3$ we obtain the \emph{tripartite information}
introduced in Ref.~\cite{CH09}, whose vanishing characterizes the extensivity of the mutual
information~$\cI_\al$. It can be readily checked that the asymptotic relation~\eqref{conjfinal}
implies that~$\widetilde\cI_\al(A_1,\dots,A_r)$ vanishes asymptotically for the models under
consideration. This follows immediately from Eq.~\eqref{cIapp} ---which is itself a consequence
of~\eqref{conjfinal}--- and the
identities~$\sum_{1\le i_1<\dots<i_l\le
  r}\sum_{k=1}^lS_\al(A_{i_k})=\binom{r-1}{l-1}\sum_{i=1}^rS_\al(A_i)$,
$\sum_{1\le i_1<\dots<i_l\le
  r}I_\al\bigl((u_{i_1},\dots,u_{i_l}),(v_{i_1},\dots,v_{i_l})\bigr)=\binom{r-2}{l-2}I_\al(\bu,\bv)$.
In particular, this shows that the conjecture~\eqref{conjfinal} implies the asymptotic extensivity
of the mutual information~$\cI_\al$ for the models under consideration. (For the infinite chain
with $s=1$, this had already been noted in Ref.~\cite{AEF14}.)

Another noteworthy consequence of the asymptotic formula~\eqref{conjfinal} is the fact that for
large~$N$ the entanglement entropy can be approximately written as (omitting, for simplicity, its
arguments)
\begin{equation}\label{Sstruct}
  S_\al\sim rs\bigg(b_\al\log\biggl(\frac{2N}\pi\biggr)+c_\al\bigg)+b_\al g\,,\quad
  \text{with}\quad g\equiv
  s\bigg(\sum_{i=1}^r\log\sin\big(\tfrac{v_i-u_i}2\big)+\log f(\bu,\bv)\bigg)
  +r\bigg(\sum_{i=1}^s\log\sin\big(\tfrac{q_i-p_i}2\big)+\log f(\bp,\bq)\bigg)\,.
\end{equation}
The term in parenthesis in the latter formula, which contains the leading
contribution~$rsb_\al\log N$ to~$S_\al$ as $N\to\infty$, depends only on the topology of the
configuration considered. In particular, from the coefficient of the $\log N$ term we deduce that
the models under consideration are critical, behaving as a $1+1$ dimensional CFT with central
charge~$rs$. Note also that the fact that the leading asymptotic behavior of the R\'enyi
entanglement entropy~$S_\al$ depends only on the topology of the configuration in \emph{both}
position and momentum space is a generalization of the widespread hypothesis (for the case~$r=1$)
that the entanglement properties of critical fermion models are determined by the topology of
their Fermi ``surface'' (see, e.g.,~Ref.~\cite{ECP10}).

On the other hand, the numerical constant~$g$ in the previous equation is independent
of~$N$ and~$\al$, and is solely determined by the geometry of the configuration in both position
and momentum space. For instance, for the two symmetric configurations discussed
in~Fig.~\ref{fig.entrs}~(a), (c) this constant is respectively equal to~$-3\log 12$ and
$-25\log 1250$.

The asymptotic formula~\eqref{Sstruct} makes it possible to tackle several relevant problems that
would otherwise be intractable in practice. For instance, it is natural to conjecture that fixing
$r$, $s$, $\ga_x$ and $\ga_p$ the block configuration which maximizes the entropy is the symmetric
one (i.e., $r$ equally spaced blocks of equal length in position space, and similarly in momentum
space). Our numeric calculations for several configurations suggest that this is indeed the case
(see, e.g., Fig.~\ref{fig.entcmp} (a) for the case~$\al=2$). As we see from Eq.~\eqref{Sstruct},
this problem reduces to a standard (constrained) maximization problem for the geometric
factor~$g$, which in turns splits into two separate problems for the function
$g_1(\bu,\bv)\equiv\sum_{i=1}^r\log\sin\big(\tfrac{v_i-u_i}2\big)+\log f(\bu,\bv)$ and its
momentum space counterpart. For instance, when~$r=2$ we can express~$g_1(\bu,\bv)$ in terms of the
length~$L_1\equiv V_1-U_1$ of the first block and the interblock distance~$d\equiv U_2-V_1$ as
\begin{equation}\label{hthde}
  g_1(\bu,\bv)=\si(\th)+\si(2\pi\ga_x-\th)+\si(2\pi\ga_x+\de)+\si(\de)-\si(\th+\de)-\si(2\pi\ga_x-\th+\de)\equiv
  h(\th,\de)\,,
\end{equation}
where $\si(x)\equiv\log\sin(x/2)$, $\th=2\pi L_1/N\in(0,2\pi\ga_x)$,
$\de=2\pi d/N\in(0,2\pi(1-\ga_x))$. Moreover, from the symmetry of~$h$
under~$\th\mapsto2\pi\ga_x-\th$ and~$\de\mapsto2\pi(1-\ga_x)-\de$, it suffices to find the maximum
of this function in the rectangle~$(0,\pi\ga_x]\times(0,\pi(1-\ga_x)]$. An elementary calculation
shows that $h$ has a local maximum at $\th=\pi\ga_x$, $\de=\pi(1-\ga_x)$, i.e., at the symmetric
configuration, and that~$\nabla h$ has no other zeros on~$(0,\pi\ga_x]\times(0,\pi(1-\ga_x)]$.
This proves the conjecture in the case~$r=2$ (cf.~Fig.~\ref{fig.entcmp} (b)). For instance,
for~$r=s=2$ the maximum value of the entropy is easily found from the latter argument and
Eq.~\eqref{Sstruct} to be $4[b_\al\log(N\sin(\pi\ga_x)\sin(\pi\ga_p)/2\pi)+c_\al]$\,.
\begin{figure}[ht]%
  \includegraphics[height=5.5cm]{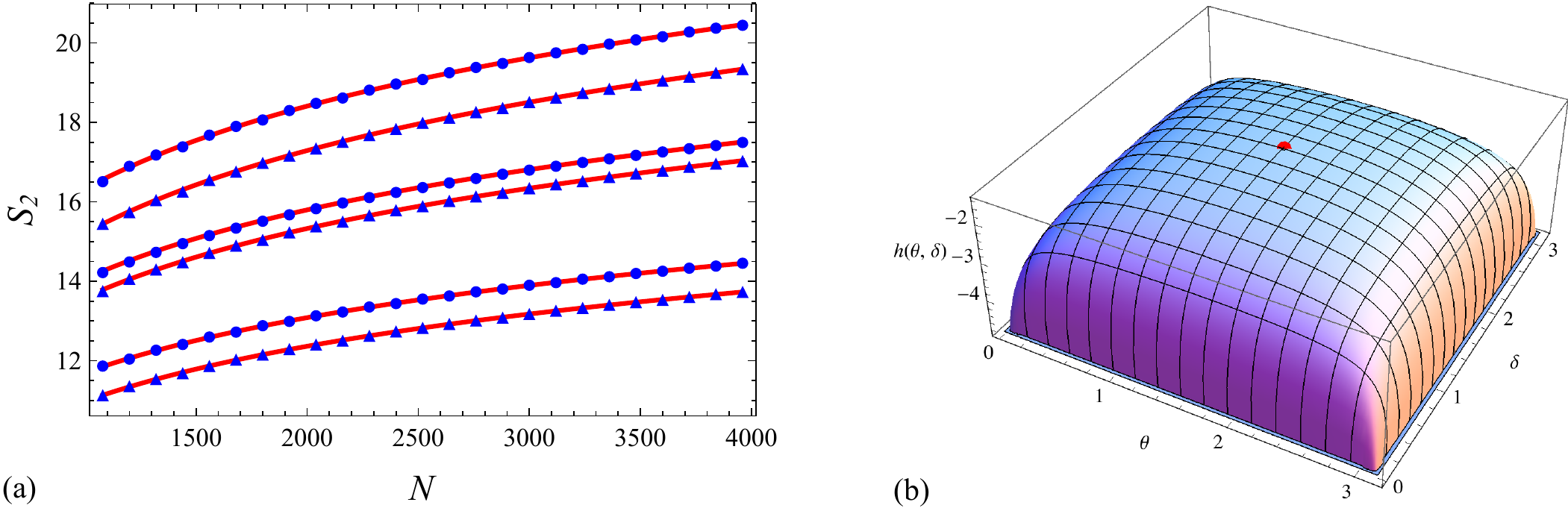}
  \caption{(a) R\'enyi entropy $S_2$ vs.\ its asymptotic approximation~\eqref{Salr1} (red line) in
    symmetric (blue points) and some non-symmetric (blue triangles) configurations
    with~$\ga_x=1/3$, $\ga_p=1/2$ and (bottom to top) $4+2$, $5+2$ and $4+3$ blocks. (b) 3D plot
    of the function~$h(\th,\de)$ in Eq.~\eqref{hthde} for $\ga_x=1/2$ (the red point corresponds
    to the symmetric configuration $(\th,\de)=(\pi/2,\pi/2)$).}
\label{fig.entcmp}
\end{figure}

\section*{Discussion}

In this work we have rigorously formulated a general duality principle which posits the invariance
of the R\'enyi entanglement entropy~$S(A;K)$ of a chain of free fermions under exchange of the
sets of excited momentum modes~$K$ and chain sites~$A$ of the subsystem under study, where both
$A$ and~$K$ are the union of an arbitrary (finite) number of blocks of consecutive sites or modes.
By means of this principle, we have derived an asymptotic formula for the R\'enyi entanglement
entropy when the set $K$ consists of a single block. From this formula and a natural assumption
concerning the additivity of the entropy when the blocks are far apart from each other in either
position or momentum space we have conjectured an asymptotic approximation for the entanglement
entropy in the general case when both sets~$A$ and~$K$ consist of an arbitrary number of blocks.
We have presented ample numerical evidence of the validity of this formula for different
multi-block configurations, and have analyzed its error comparing it with its counterpart for
the~$XX$ model discussed by Calabrese and Essler~\cite{CE10}. Our conjecture also yields an
asymptotic formula for the mutual information of a certain number of blocks in position (or
momentum) space valid for arbitrary multi-block configurations, which for $s=1$ and in the case of
an infinite chain is consistent with the general one for $1+1$ dimensional CFTs.

The previous results open up several natural research avenues. In the first place, it would be
desirable to find a rigorous proof of the fundamental asymptotic relation~\eqref{genconj}, which
leads to the explicit asymptotic formula~\eqref{conjfinal}. In particular, it would be of interest
to determine the range of models for which this relation holds. Another related problem is to
study analytically the precise behavior of the error term in the latter equation. Indeed, our
numerical results suggest that this error exhibits a qualitatively similar but considerably more
complex behavior than its analogue for an infinite chain with a single block in both position and
momentum spaces studied in Ref.~\cite{CE10}. Finally, an interesting question arising from the
discussion after Eq.~\eqref{Sstruct} is the analysis of the configurations \emph{minimizing} the
entropy with appropriate constraints, which could be naturally regarded as akin to
``semiclassical'' states.


\bibliography{cmprefs}

\section*{Acknowledgements}

This work was partially supported by Spain's MINECO under research grant no.~FIS2015-63966-P. PT
has been partly supported by the ICMAT Severo Ochoa project SEV-2015-0554 (MINECO, Spain). JAC
would also like to acknowledge the financial support of the Universidad Complutense de Madrid
through a 2015 predoctoral scholarship.

\section*{Author contributions statement}

J.A.C, F.F., A.G.-L. and P.T. contributed equally to this work. 

\section*{Additional information}
\textbf{Competing financial interests:} the authors declare no competing financial interests.

\end{document}